\newcommand{\norm}[1]{\| #1 \|}
\newcommand{\deltamaps}{\overset{\Delta_1}\longmapsto}
\newcommand{\group}{\mathbb{G}}
\newcommand{\Adv}{\mathrm{Adv}^\pm}
\newcommand{\domain}{\mathcal{\widetilde D}}
\newtheorem{theorem}{Theorem}
\newtheorem{lemma}[theorem]{Lemma}
\theoremstyle{remark}
\newtheorem{definition}[theorem]{Definition}
\newtheorem{example}[theorem]{Example}
\newcommand{\refsec}[1]{Section~\ref{sec:#1}}
\newcommand{\refeqn}[1]{(\ref{eqn:#1})}
\newcommand{\refthm}[1]{Theorem~\ref{thm:#1}}
\newcommand{\reflem}[1]{Lemma~\ref{lem:#1}}
\newcommand{\cD}{{\cal D}}
\newcommand{\tGamma}{{\widetilde{\Gamma}}}
\newcommand{\eps}{{\varepsilon}}
\newcommand{\tG}{{\widetilde{G}}}
\newcommand{\pfstart}{\begin{proof}} 
\newcommand{\pfend}{\end{proof}}
\newcommand{\ignore}[1]{}
\newcommand{\elem}[1]{[\![#1]\!]}
\begin{document}
\title{Adversary Lower Bound for the $k$-sum Problem}
\author{Aleksandrs Belovs%
\thanks{Faculty of Computing, University of Latvia}\\
{\tt stiboh@gmail.com}
 \and
Robert \v Spalek%
  \thanks{%
  Google, Inc.}\\
  {\tt spalek@google.com}
}
\date{}
\maketitle

\begin{abstract}
We prove a tight quantum query lower bound $\Omega(n^{k/(k+1)})$ for the problem of deciding whether there exist $k$ numbers among $n$ that sum up to a prescribed number, provided that the alphabet size is sufficiently large.

This is an extended and simplified version of an earlier preprint of one of the authors~\cite{belovs:adv-el-dist}.
\end{abstract}


\section{Introduction}
Two main techniques for proving lower bounds on quantum query complexity are the polynomial method~\cite{beals:polynomial} developed by Beals {\em et al.} in 1998, and the adversary method~\cite{ambainis:adversary} developed by Ambainis in 2000. Both techniques are incomparable. There are functions with adversary bound strictly larger than polynomial degree~\cite{ambainis:polynomialVsQCC}, as well as functions with the reverse relation.

One of the examples of the reverse relation is exhibited by the element distinctness function. The input to the function is a string of length $n$ of symbols in an alphabet of size $q$, i.e., $x = (x_i)\in [q]^n$. We use notation $[q]$ to denote the set $\{1,\dots,q\}$. The element distinctness function evaluates to 0 if all symbols in the input string are pairwise distinct, and to 1 otherwise.

The quantum query complexity of element distinctness is $O(n^{2/3})$ with the algorithm given by Ambainis~\cite{ambainis:distinctness}. The tight lower bounds were given by Aaronson and Shi~\cite{shi:collisionLower}, Kutin~\cite{kutin:collisionLower} and Ambainis~\cite{ambainis:collisionLower} using the polynomial method. 

The adversary bound, however, fails for this function. The reason is that the function has 1-certificate complexity 2, and the so-called certificate complexity barrier~\cite{spalek:adversaryEquivalent, zhang:adversaryPower} implies that for any function with 1-certificate complexity bounded by a constant, the adversary method fails to achieve anything better than $\Omega(\sqrt{n})$.

In 2006 a stronger version of the adversary bound was developed by H\o yer {\em et al.}~\cite{hoyer:adversaryNegative}. This is the negative-weight adversary lower bound defined in~\refsec{def}. Later it was proved to be optimal by Reichardt {\em et al.}~\cite{reichardt:adversaryTight, lee:stateConversion}. Although the negative-weight adversary lower bound is known to be tight, it has almost never been used to prove lower bounds for explicit functions. Vast majority of lower bounds by the adversary method used the old positive-weight version of this method. But since the only competing polynomial method is known to be non-tight, a better understanding of the negative-weight adversary method would be very beneficial. In the sequel, we consider the negative-weight adversary bound only, and we will omit the adjective ``negative-weight''.

In this paper we use the adversary method to prove a lower bound for the following variant of the knapsack packing problem. Let $\group$ be a finite Abelian group, and $t\in \group$ be its arbitrary element. For a positive integer $k$, the {\em k-sum problem} consists in deciding whether the input string $x_1,\dots,x_n\in \group$ contains a subset of $k$ elements that sums up to $t$. We assume that $k$ is an arbitrary but fixed constant. The main result of the paper is the following
\begin{theorem}
\label{thm:ksum}
For a fixed $k$, the quantum query complexity of the $k$-sum problem is $\Omega(n^{k/(k+1)})$ provided that $|\group|\ge n^k$.
\end{theorem}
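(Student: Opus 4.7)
The plan is to construct an explicit adversary matrix $\Gamma$ witnessing $\Adv(f) = \Omega(n^{k/(k+1)})$ and invoke the fact that quantum query complexity equals the negative-weight adversary bound. The first step is to choose a hard subdomain. Since $|\group| \ge n^k$, the expected number of $k$-subsets summing to $t$ in a uniform input $x \in \group^n$ is $\binom{n}{k}/|\group| = O(1)$, so with constant probability a uniform input has either $0$ or exactly $1$ such subset. Let $\domain_1 \subseteq \group^n$ consist of inputs with exactly one such $k$-subset and let $\domain_0$ consist of those with none; every $x \in \domain_1$ then carries a unique certificate $S_x \in \binom{[n]}{k}$, and I restrict $\Gamma$ to be supported on $\domain_1 \times \domain_0$ (plus its transpose).

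For each $j \in \{1,\dots,k\}$ and $(x,y) \in \domain_1 \times \domain_0$, let $\Gamma_j[x,y] = 1$ iff $x_i = y_i$ for all $i \notin S_x$ and $|\{i \in S_x : x_i \ne y_i\}| = j$; then set $\Gamma := \sum_{j=1}^{k} c_j \Gamma_j$ with real coefficients $c_j$ to be chosen. The matrix $\Gamma \Gamma^{\ast}$ commutes with the joint action of the symmetric group $S_n$ permuting positions and a natural translation action on coordinates, so it block-diagonalizes into pieces indexed by an analogue of the Johnson scheme on $k$-subsets combined with Fourier characters on $\group^n$. I would compute $\|\Gamma\|$ via its dominant block and bound each $\|\Gamma \circ \Delta_i\|$ block-by-block, exploiting the key structural fact that $(\Gamma \circ \Delta_i)[x,y]$ vanishes unless $i \in S_x$, i.e.\ only a $k/n$-fraction of positions contribute. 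Choosing the $c_j$ as coefficients of an extremal degree-$k$ polynomial on $\{0,1,\dots,k\}$ should deliver the target ratio $\|\Gamma\|/\max_i\|\Gamma\circ\Delta_i\| = \Omega(n^{k/(k+1)})$.

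The main obstacle is the spectral analysis and the choice of the $c_j$. With the naive weights $c_1=\dots=c_k=1$ one obtains only $\Omega(\sqrt{n})$, the certificate-complexity barrier that applies to positive-weight adversary matrices. Breaking the barrier requires genuine cancellation between positive and negative weights, so $(c_1,\dots,c_k)$ must be tuned to kill the dominant isotypic components of each $\Gamma \circ \Delta_i$ while preserving a single large eigenvalue of $\Gamma$. Carrying this out demands a precise description of the relevant block eigenvalues as polynomials in an overlap parameter, a verification that these polynomials admit a Chebyshev-type extremizer of degree $k$, and careful control of the boundary effects where the subdomain restriction $\domain_0, \domain_1$ breaks the clean $S_n \times \group^n$-symmetry of the ambient space.
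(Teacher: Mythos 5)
Your construction restricts the adversary matrix to pairs $(x,y)$ in which $y$ agrees with $x$ on every coordinate outside the certificate $S_x$, so every nonzero entry sits at Hamming distance at most $k$. This is the decisive structural mismatch with the paper's construction, and I believe it caps your adversary ratio at $O(\sqrt n)$ regardless of how you tune the within-certificate weights $c_1,\dots,c_k$. To see why, compare with the paper's $\tGamma$, which is built from blocks $\tG_S = \sum_m \alpha_m F_S \otimes E^{(n-k)}_m$: the $F_S$ factor acts inside $S$ (and your $\Gamma_j$ decomposition with weights $c_j$ is essentially a choice of $F_S$), while the factor $\sum_m\alpha_m E^{(n-k)}_m$ acts on the remaining $n-k$ coordinates. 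Your construction forces that second factor to be the identity, i.e.\ constant $\alpha_m$. But the entire gain beyond $\sqrt n$ in the paper comes from the non-constant schedule $\alpha_m = \max\{2-m/n^{k/(k+1)},0\}\,n^{k(1-k)/(2(k+1))}$: when $\Delta_i$ hits a coordinate $i\notin S$, the coefficient $\alpha_m$ turns into the difference $\alpha_m-\alpha_{m+1}=O(n^{-k/2})$, which is what keeps $\|\tGamma\circ\Delta_i\|$ small. With constant $\alpha_m$ the contribution of sets $S\ni i$ alone already gives $\|\tGamma\circ\Delta_i\|^2 = \Theta(\alpha^2\binom{n-1}{k-1}) = \Theta(\alpha^2 n^{k-1})$ against $\|\tGamma\|^2 = \Theta(\alpha^2 n^k)$, a ratio of $\sqrt n$. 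So the barrier you are hitting is not the lack of sign cancellation inside $S$, it is the lack of any mixing across the $n-k$ coordinates outside $S$. No degree-$k$ Chebyshev-style tuning of $(c_1,\dots,c_k)$ can compensate, because it never touches the place where the paper's construction introduces its crucial cancellation.

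Two smaller remarks. First, your restriction to positive inputs with a unique certificate is unnecessary and only complicates the symmetry; the paper sidesteps it by allowing multiple rows per positive input (one per orthogonal-array occurrence), which is explicitly justified in the reduction in the proof of the adversary theorem. Second, the paper's optimization step does not produce a Chebyshev polynomial but a piecewise-linear $\alpha_m$ schedule, obtained directly by balancing the two constraints $\alpha_m\le m^{(1-k)/2}$ and $\alpha_m\le\alpha_{m+1}+n^{-k/2}$; looking for a Chebyshev-type extremizer is the wrong target here.
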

Clearly, the 1-certificate complexity of the $k$-sum problem is $k$, hence, it is also subject to the certificate complexity barrier. 

The result of \refthm{ksum} is tight thanks to the quantum algorithm based on quantum walks on the Johnson graph \cite{ambainis:distinctness}. This algorithm was first designed to solve the $k$-distinctness problem. This problem asks for detecting whether the input string $x\in [q]^n$ contains $k$ elements that are all equal. Soon enough it was realized that the same algorithm works for any function with 1-certificate complexity $k$~\cite{childs:subsetFinding}, in particular, for the $k$-sum problem. The question of the tightness of this algorithm remained open for a long time. It was known to be tight for $k=2$ due to the lower bound for the element distinctness problem. Now we know that it is not optimal for the $k$-distinctness problem if $k>2$~\cite{belovs:learningKDist}. However, \refthm{ksum} shows that, for every $k$, quantum walk on the Johnson graph is optimal for some functions with 1-certificate complexity $k$.  Finally, we note that the $k$-sum problem is also interesting because of its applications in quantum Merkle puzzles~\cite{brassard:merkle, kalach:personal}. 

Actually, we get \refthm{ksum} as a special case of a more general result we are about to describe. The following is a special case of a well-studied combinatorial object:
\begin{definition}\label{def:orthogonalArray}
Assume $T$ is a subset of $[q]^k$ of size $q^{k-1}$.  We say that $T$ is an \emph{orthogonal array of length $k$} iff, for every index $i \in [k]$ and for every vector $x_1,\dots, x_{i-1}, x_{i+1},\dots, x_k \in [q]$, there exists exactly one $x_i \in [q]$ such that $(x_1, \dots, x_k) \in T$.
\end{definition}

For $x=(x_i)\in [q]^n$ and $S\subseteq [n]$ let $x_S$ denote the projection of $x$ on $S$, i.e., the vector $(x_{s_1},\dots,x_{s_\ell})$ where $s_1,\dots,s_\ell$ are the elements of $S$ in the increasing order.

Assume each subset $S$ of $[n]$ of size $k$ is equipped with an orthogonal array $T_S$. The {\em $k$-orthogonal array} problem consists in finding an element of any of the orthogonal arrays in the input string. More precisely, the input $x\in [q]^n$ evaluates to 1 iff there exists $S\subseteq [n]$ of size $k$ such that $x_S \in T_S$.

Consider the following two examples:

\begin{example}
Let $\group$ be a commutative group with $q$ elements and $t\in\group$.  $T = \{x \in \group^k: \sum_{i=1}^k x_i = t\}$ is an orthogonal array of length $k$.  This choice corresponds to the $k$-sum problem of \refthm{ksum}.
\end{example}

\begin{example}
\label{exm:distinctness}
$T = \{x \in [q]^2: x_1=x_2\}$ is an orthogonal array of length 2.  This corresponds to the element distinctness problem from~\cite{belovs:adv-el-dist}.
\end{example}

\goodbreak
\begin{theorem}
\label{thm:orthogonal}
For a fixed $k$ and any choice of the orthogonal arrays $T_S$, the quantum query complexity of the $k$-orthogonal array problem is $\Omega(n^{k/(k+1)})$ provided that $q \ge n^k$. The constant behind big-Omega depends on $k$, but not on $n$, $q$, or the choice of $T_S$.
\end{theorem}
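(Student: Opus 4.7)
The plan is to apply the negative-weight adversary method directly to a carefully chosen promise version of the problem. Let $\mathcal{D}_0$ consist of inputs with no $k$-subset $S$ satisfying $x_S\in T_S$, and let $\mathcal{D}_1$ consist of inputs with exactly one such $S$, denoted $S_y$ and called the unique witness of $y$. The assumption $q\ge n^k$ guarantees that the expected number of witnesses in a uniformly random input is $\binom{n}{k}/q\le 1/k!$, and a standard second-moment or inclusion--exclusion argument shows that unique-witness 1-inputs form a constant fraction (independent of $n$) of all 1-inputs. Consequently, a lower bound for the resulting promise problem transfers to the original.

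Next I would construct an adversary matrix $\Gamma$ indexed by $\mathcal{D}_0\times\mathcal{D}_1$ that is invariant under the natural action of $S_n$ permuting coordinates. The guiding principle is that $\Gamma[x,y]$ is supported on pairs where $x$ arises from $y$ by altering precisely the coordinates in the witness $S_y$. A natural form is
\[
\Gamma[x,y]=\Bigl(\prod_{i\notin S_y}[x_i=y_i]\Bigr)\cdot \omega(x_{S_y},y_{S_y}),
\]
where $\omega$ is a weight depending only on the values on the witness positions. To respect the orthogonal-array symmetry and keep the resulting constant independent of the specific $T_S$, I would average $\omega$ so that it depends only on combinatorial features of the pair, such as whether $x_{S_y}\in T_{S_y}$ and the Hamming distance between $x_{S_y}$ and $y_{S_y}$.

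The heart of the argument is computing $\|\Gamma\|$ and $\max_i\|\Gamma\circ\Delta_i\|$. For $\|\Gamma\|$ one can use a near-uniform singular vector on each of $\mathcal{D}_0$ and $\mathcal{D}_1$, and a counting argument that tracks how many 0-inputs lie within Hamming distance $k$ of a typical 1-input on its witness. For $\|\Gamma\circ\Delta_i\|$, the crucial feature is that each $y\in\mathcal{D}_1$ has only $k$ coordinates in its witness, so only a $k/n$ fraction of 1-inputs interact with any fixed position $i$. Balancing the weights should produce the ratio $\Omega(n^{k/(k+1)})$; the exponent $k/(k+1)$ emerges from optimizing the trade-off between the spread of $\Gamma$ across witnesses and the concentration at a single coordinate.

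The main obstacle will be bounding $\|\Gamma\circ\Delta_i\|$ tightly. In the $k=2$ case treated in~\cite{belovs:adv-el-dist}, this was handled by decomposing $\Gamma$ along isotypic components under the action of $S_n$ together with a value-symmetry group on $[q]$, so that $\Gamma\circ\Delta_i$ becomes block-diagonal with explicitly computable blocks. For general $k$ the analogue requires controlling how $k$-element subsets of $[n]$ interact with a single coordinate under this representation-theoretic decomposition, which brings in combinatorics on two-row Young diagrams of shape $(n-j,j)$ for $j\le k$. A secondary subtlety is that the constant in $\Omega$ must be uniform over all admissible families $\{T_S\}$, forcing the construction to depend on $T_S$ only through the defining property of \refdef{orthogonalArray}.
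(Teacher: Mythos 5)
Your proposed adversary matrix
\[
\Gamma[x,y]=\Bigl(\prod_{i\notin S_y}[x_i=y_i]\Bigr)\cdot \omega(x_{S_y},y_{S_y})
\]
has its support restricted to pairs that agree on every coordinate outside the witness $S_y$. This is the crux of the gap. In the paper's framework the block corresponding to a witness set $S$ is $\tG_S = \sum_m \alpha_m\, F_S \otimes E^{(n-k)}_m$, and demanding agreement outside $S$ is exactly the same as taking $\sum_m \alpha_m E^{(n-k)}_m$ to be a multiple of the identity, i.e.\ $\alpha_m$ constant. Plugging a constant sequence into Lemmas~\ref{lem:normGamma} and~\ref{lem:normGamma1} gives $\|\tGamma\| = \Theta(\alpha\, n^{k/2})$ but $\|\tGamma\circ\Delta_1\| = \Theta(\alpha\, n^{(k-1)/2})$, so the ratio tops out at $\Theta(\sqrt n)$, independently of $k$. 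In other words, your construction is a souped-up version of the classical positive-weight adversary argument for element distinctness (pairs at Hamming distance $\le k$), and it cannot reach $n^{k/(k+1)}$. The engine that drives the paper's bound is precisely that the adversary matrix is \emph{dense}: it spreads weight over all Hamming distances through a carefully decreasing profile $\alpha_m$, which is then traded off against the new penalty term $(\alpha_m-\alpha_{m+1})n^{k/2}$ that appears for $i\notin S$. Your sketch has no analogue of this trade-off, and the ``only a $k/n$ fraction of columns interact with a fixed coordinate'' observation is much weaker than the spectral bound actually needed.

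Two further points. First, the earlier preprint you cite does not use the sparse support you propose; it also works with dense, association-scheme-type matrices (the Hamming-scheme projectors are what the two-row Young diagrams package), so invoking it does not rescue the construction. Second, the paper does not pass to a unique-witness promise problem at all. It allows each positive input to appear as a row once per witness, stacks all $\binom nk$ blocks, and embeds the true adversary matrix into a larger $\tGamma$ with ``illegal'' columns (negative-column labels that actually contain a certificate), then shows in \refsec{Gamma} that deleting those columns costs only a constant factor when $q\ge n^k$. Your unique-witness restriction is plausible as an alternative but is not how the alphabet-size hypothesis is actually used, and it does not interact with the main obstacle above.
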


The orthogonal array condition specifies that even if an algorithm has queried $k-1$ elements out of any $k$-tuple, it has the same information whether this $k$-tuple is a 1-certificate as if it has queried no elements out of it. Because of this, the search for a $k$-tuple as a whole entity is the best the quantum algorithm can do. Our proof of \refthm{orthogonal} is a formalization of this intuition.

Let us elaborate on the requirement on the size of the alphabet. It is easy to see that some requirement is necessary. Indeed, the $k$-sum problem can be solved in $O(\sqrt{n})$ queries if the size of $\group$ is $O(1)$, using the Grover search to find up to $k$ copies of every element of $\group$ in the input string, and trying to construct $t$ out of what is found. In some cases, e.g., when $t$ is the identity element and $k$ equals the order of the group, the problem becomes trivial if $n$ is large enough.

The requirement on the size of the alphabet for the element distinctness problem is a subtle issue. The lower bounds by Aaronson and Shi~\cite{shi:collisionLower} and Kutin~\cite{kutin:collisionLower} require the size of the alphabet to be at least $\Omega(n^2)$ that is the same that gives \refthm{orthogonal}. However, later Ambainis~\cite{ambainis:collisionLower} showed that the lower bound remains the same even if one allows the alphabet of size $n$. Reducing the alphabet size in \refthm{orthogonal} is one of our open problems.

\section{Adversary Lower Bound}
\label{sec:def}
In the paper we are interested in the quantum query complexity of solving the orthogonal array problem.  For the definitions and main properties of quantum query complexity refer to, e.g., Ref.~\cite{buhrman:querySurvey}.  For the purposes of our paper, it is enough with the definition of the adversary bound we give in this section.

Compared to the original formulation of the negative adversary bound~\cite{hoyer:adversaryNegative}, our formulation has two differences.  Firstly, in order to simplify notations we call an adversary matrix a matrix with rows labeled by positive inputs, and the columns by the negative ones. It is a quarter of the original adversary matrix that completely specifies the latter. Secondly, due to technical reasons, we allow several rows to be labeled by the same positive input. All this is captured by the following definition and theorem.


\begin{definition}
\label{defn:adversary}
Let $f$ be a function $f\colon \cD\to \{0,1\}$ with domain $\cD\subseteq [q]^n$.  Let $\domain$ be a set of pairs $(x,a)$ with the property that the first element of each pair belongs to $\cD$, and $\domain_i = \{(x,a)\in \domain : f(x)=i\}$ for $i\in\{0,1\}$.  An {\em adversary matrix} for the function $f$ is a non-zero real $\domain_1\times\domain_0$ matrix $\Gamma$. And, for $i\in[n]$, let $\Delta_i$ denote the $\domain_1\times \domain_0$ matrix defined by
\[ \Delta_i\elem{(x,a),(y,b)} = \begin{cases} 0,& x_i=y_i; \\ 1,&\text{otherwise}. \end{cases} \]
\end{definition}

\begin{theorem}[Adversary bound \cite{hoyer:adversaryNegative}] \label{thm:adv}
In the notation of Definition~\ref{defn:adversary}, $Q_2(f)=\Omega(\Adv(f))$, where
\begin{equation}\label{eqn:adversary}
\Adv(f) = \sup_{\Gamma} \frac{\norm{\Gamma}}{\max_{i\in n} \norm{\Gamma\circ\Delta_i} }
\end{equation}
with the maximization over all adversary matrices for $f$, $\norm{\cdot}$ is the spectral norm, and $Q_2(f)$ is the quantum query complexity of $f$.
\end{theorem}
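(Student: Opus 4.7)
The plan is to reduce to the negative adversary bound of H\o yer et al.~\cite{hoyer:adversaryNegative}, whose original statement uses a symmetric adversary matrix indexed by $\cD$ vanishing on same-value pairs. I need only verify that the two notational deviations noted before Definition~\ref{defn:adversary} --- taking the ``quarter'' $\cD_1\times\cD_0$ block, and allowing duplicated row labels --- preserve the ratio being bounded, where $\cD_i:=f^{-1}(i)$.

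For the first deviation, given a genuine $\cD_1\times\cD_0$ matrix $\Gamma$, I would embed it as the off-diagonal block of the symmetric matrix
\[
\Gamma'\;=\;\begin{pmatrix}0 & \Gamma\\ \Gamma^T & 0\end{pmatrix}
\]
indexed by $\cD$. A standard block-SVD calculation gives $\|\Gamma'\|=\|\Gamma\|$, since the nonzero singular values of $\Gamma$ appear as $\pm$-pairs in the spectrum of $\Gamma'$. Hadamard product with the standard symmetric sign matrix preserves the block structure, yielding $\|\Gamma'\circ\Delta_i^{\mathrm{sym}}\|=\|\Gamma\circ\Delta_i\|$. Hence the ratio in \refeqn{adversary} equals the one bounded by the cited theorem.

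For the second deviation, I would revisit the potential-function argument of \cite{hoyer:adversaryNegative} and observe that it passes through verbatim when $\domain_1$ has duplicated first coordinates, provided we stipulate that the algorithm's internal state $|\psi^t_{(x,a)}\rangle:=|\psi^t_x\rangle$ depends only on the actual input, never on the analysis label $a$. Each of the three required facts --- an initial potential of order $\|\Gamma\|$, vanishing of the final potential for any correct algorithm, and a per-query change of order $\max_i\|\Gamma\circ\Delta_i\|$ --- is preserved by row/column duplication, because $\Delta_i\elem{(x,a),(y,b)}$ depends only on input coordinates and auxiliary labels can be folded into the associated singular vectors of $\Gamma$.

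The main obstacle here is purely bookkeeping: the substantive work is already done in \cite{hoyer:adversaryNegative}, and what remains is to confirm that the block embedding and the label duplication leave the essential inequalities intact. Once both reductions are in place, \refthm{adv} follows at once from the cited result.
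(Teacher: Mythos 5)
Your first reduction matches the paper's: the paper embeds $\Gamma$ in
\[
\overline\Gamma = \begin{pmatrix}0 & \Gamma^*\\ \Gamma & 0\end{pmatrix},
\]
which, up to reordering the blocks, is your $\Gamma'$, and the identities $\|\overline\Gamma\|=\|\Gamma\|$ and $\|\overline\Gamma\circ\overline\Delta_i\|=\|\Gamma\circ\Delta_i\|$ are exactly what you state. Where you diverge is the second reduction, the duplicated labels $(x,a)$. You propose to re-open the potential-function proof of \cite{hoyer:adversaryNegative} and argue it goes through with $|\psi^t_{(x,a)}\rangle := |\psi^t_x\rangle$. The paper instead stays entirely at the level of linear algebra and never looks inside the cited theorem: it takes the top eigenvector $\delta=(\delta_{x,a})$ of $\overline\Gamma$, sets $\delta'_x = \bigl(\sum_a \delta_{x,a}^2\bigr)^{1/2}$, and contracts $\overline\Gamma$ along the auxiliary label with weights $\delta_{x,a}/\delta'_x$ to produce a genuine $\cD\times\cD$ adversary matrix $\Gamma'$ (vanishing on same-value pairs). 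A two-line Rayleigh-quotient computation then gives $\|\Gamma'\|\ge\|\Gamma\|$ and, pushing the same weights through $\Delta'_i$, $\|\Gamma'\circ\Delta'_i\|\le\|\overline\Gamma\circ\overline\Delta_i\|=\|\Gamma\circ\Delta_i\|$, so $\Gamma'$ is at least as good an adversary matrix and \cite{hoyer:adversaryNegative} can be invoked as a black box. Both routes are valid, and your observation about the progress measure is correct in substance, but as written it asserts rather than verifies the three ingredients (initial value, final value, per-query change) for the generalized weighting and thus depends on the reader re-deriving the internals of \cite{hoyer:adversaryNegative}; the paper's contraction is more self-contained and shows, as a small bonus, that allowing duplicated labels cannot increase the value of the supremum in \refeqn{adversary}.
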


\begin{proof}
In the original negative-weight adversary bound paper~\cite{hoyer:adversaryNegative}, Eq.~\refeqn{adversary} is proven when $\Gamma$ is a real symmetric $\cD\times \cD$ matrix with the property $\Gamma\elem{x,y}=0$ if $f(x)=f(y)$, and $\Delta_i$ are modified accordingly.  We describe a reduction from the adversary matrix in our definition, $\Gamma$, to the adversary matrix $\Gamma'$ in the definition of~\cite{hoyer:adversaryNegative}. Also, let $\Delta'_i$ be the $\cD\times\cD$ matrix with $\Delta'_i\elem{x,y}=1$ if $x_i\ne y_i$, and 0, otherwise.

At first, define $\overline\Gamma$ as 
\[
\overline\Gamma = \left( \begin{matrix}
0 & \Gamma^* \\
\Gamma & 0 \\
\end{matrix} \right) \enspace.
\]
Note that $\norm{\overline\Gamma} = \norm{\Gamma}$, and the spectrum of $\overline\Gamma$ is symmetric.  Also, for all $i$, $\norm{\overline\Gamma\circ\overline\Delta_i} = \norm{\Gamma\circ\Delta_i}$, where $\overline\Delta_i$ is defined similarly to $\overline\Gamma$.

Let $\delta = (\delta_{x,a})$ be the normalized eigenvalue $\norm{\Gamma}$ eigenvector of $\overline\Gamma$.  For all $x,y\in \cD$, let:
\[\delta'_x = \sqrt{\sum_{a:(x,a)\in\widetilde{\cD}} \delta_{x,a}^2}\qquad\mbox{and}\qquad
 \Gamma'\elem{x,y} = \frac1{\delta'_x\delta'_y}\sum_{\substack{a:(x,a)\in\widetilde{\cD}\\ b:(y,b)\in \widetilde{\cD}}} \delta_{x,a}\delta_{y,b} \overline\Gamma\elem{(x,a),(y,b)} \enspace. \]
Then it is easy to see that $\delta' = (\delta'_{x})$ satisfies $\|\delta'\|=1$ and $(\delta')^* \Gamma'\delta' = \delta^*\overline \Gamma \delta = \norm{\Gamma}$, hence, $\|\Gamma'\|\ge \|\Gamma\|$.

And vice versa, if $\eps'$ is such that $\|\eps'\|=1$ and $(\eps')^* (\Gamma'\circ \Delta'_i)\eps' = \|\Gamma'\circ \Delta'_i\|$, let $\eps_{x,a} = \delta_{x,a} \eps'_x/\delta'_x$. Again, $\|\eps\|=1$ and $\eps^*(\overline\Gamma\circ\overline\Delta_i)\eps = (\eps')^*(\Gamma'\circ\Delta'_i)\eps'$, hence, $\|\Gamma'\circ \Delta'_i\| \le \|\overline\Gamma\circ \overline \Delta_i\| = \norm{\Gamma\circ\Delta_i}$.

This means that $\Gamma'$ provides at least as good adversary lower bound as $\Gamma$. 
\end{proof}


\section{Proof}
In this section we prove \refthm{orthogonal} using the adversary lower bound, \refthm{adv}.  The idea of our construction is to embed the adversary matrix $\Gamma$ into a slightly larger matrix $\tGamma$ with additional columns. Then $\Gamma\circ \Delta_i$ is a submatrix of $\tGamma\circ \Delta_i$, hence, $\|\Gamma\circ \Delta_i\|\le \|\tGamma\circ \Delta_i\|$.  (In this section we use $\Delta_i$ to denote all matrices defined like in Definition~\ref{defn:adversary}, with the size and the labels of the rows and columns clear from the context.)  It remains to prove that $\|\tGamma\|$ is large, and that $\|\Gamma\|$ is not much smaller than $\|\tGamma\|$.

The proof is organized as follows.  In \refsec{tGamma} we define $\tGamma$ in dependence on parameters $\alpha_m$, in \refsec{normtGamma} we analyze its norm, in Sections~\ref{sec:Delta1} and~\ref{sec:tGamma1Norm} we calculate $\norm{\tGamma\circ\Delta_i}$, in \refsec{alpha} we optimize $\alpha_m$s, and, finally, in \refsec{Gamma} we prove that the norm of the true adversary matrix $\Gamma$ is not much smaller than the norm of $\tGamma$.

\subsection{Adversary matrix}
\label{sec:tGamma}
Matrix $\tGamma$ consists of $\binom n k$ matrices $\tG_{s_1,\dots,s_k}$ stacked one on another for all possible choices of $S=\{s_1,\dots,s_k\}\subset[n]$:
\begin{equation}\label{eqn:tGamma}
\tGamma = \left(
\begin{array}{c} \tG_{1,2,\dots,k} \\ \tG_{1,2,\dots,k-1,k+1} \\ \dots \\ \tG_{n-k+1,n-k+2,\dots,n} \\ \end{array}
\right) \enspace.
\end{equation}
Each $\tG_S$ is a $q^{n-1} \times q^n$ matrix with rows indexed by inputs $(x_1, \dots, x_n) \in [q]^n$ such that $x_S \in T_S$, and columns indexed by all possible inputs $(y_1, \dots, y_n) \in [q]^n$.

We say a column with index $y$ is {\em illegal } if $y_S\in T_S$ for some $S\subseteq[n]$. After removing all illegal columns, $\tG_S$ will represent the part of $\Gamma$ with the rows indexed by the inputs having an element of the orthogonal array on $S$.  Note that some positive inputs appear more than once in $\Gamma$. More specifically, an input $x$ appears as many times as many elements of the orthogonal arrays it contains. 

This construction may seem faulty, because there are elements of $[q]^n$ that are used as labels of both rows and columns in $\tGamma$, and hence, it is trivial to construct a matrix $\tGamma$ such that the value in~\refeqn{adversary} is arbitrarily large. But we design $\tGamma$ in a specifically restrictive way so that it still is a good adversary matrix after the illegal columns are removed. 

Let $J_q$ be the $q\times q$ all-ones matrix. Assume $e_0,\dots,e_{q-1}$ is an orthonormal eigenbasis of $J_q$ with $e_0=1/\sqrt{q}(1,\dots,1)$ being the eigenvalue $q$ eigenvector. Consider the vectors of the following form:
\begin{equation}
\label{eqn:v}
v = e_{v_1}\otimes e_{v_2}\otimes\cdots\otimes e_{v_n} \enspace,
\end{equation}
where $v_i \in \{0,\dots,q-1\}$. These are eigenvectors of the Hamming Association Scheme on $[q]^n$. For a vector $v$ from~\refeqn{v}, the {\em weight} $|v|$ is defined as the number of non-zero entries in $(v_1,\dots,v_n)$. Let $E_k^{(n)}$, for $k=0,\dots,n$, be the orthogonal projector on the space spanned by the vectors from~\refeqn{v} having weight $k$. These are the projectors on the eigenspaces of the association scheme. Let us denote $E_i = E^{(1)}_i$ for $i=0,1$. These are $q\times q$ matrices. All entries of $E_0$ are equal to $1/q$, and the entries of $E_1$ are given by
\[
E_1\elem{x,y} = \begin{cases}
1-1/q,& x=y;\\
-1/q,& x\ne y.
\end{cases}
\]

Elements of $S$ in $\tG_S$ should be treated differently from the rest of the elements. For them, we define a $q^{k-1}\times q^k$ matrix $F_S$. It has rows labeled by the elements of $T_S$ and columns by the elements of $[q]^k$, and is defined as follows.

\begin{definition}\label{def:F}
Let 
\[E^{(k)}_{<k} = I - E^{(k)}_k = \sum_{i=0}^{k-1} E^{(k)}_i = \sum_{\substack{u = e_{u_1} \otimes \cdots \otimes e_{u_k}\\ |u|<k}} uu^*\]
 be the projector onto the subspace spanned by the vectors of less than maximal weight.  Let $F_S$ be $\sqrt q$ times the sub-matrix of $E^{(k)}_{<k}$ consisting of only the rows from $T_S$. 
\end{definition}


Finally, we define $\tGamma$ as in~\refeqn{tGamma} with $\tG_S$ defined by
\begin{equation}
\label{eqn:GT}
\tG_S = \sum_{m=0}^{n-k} \alpha_m F_S \otimes E^{(n-k)}_m \enspace,
\end{equation}
where $F_S$ acts on the elements in $S$ and $E_m$ acts on the remaining $n-k$ elements. Coefficients $\alpha_m$ will be specified later.

\subsection{Norm of $\tGamma$}
\label{sec:normtGamma}

\begin{lemma}\label{lem:normGamma}
Let $\tGamma$ be like in~\refeqn{tGamma} with $\tG_S$ defined like in~\refeqn{GT}. Then
\begin{itemize}
\item[(a)] $\|\tGamma\| = \Omega(\alpha_0 n^{k/2})$,
\item[(b)] $\|\tGamma\| = O(\max_m \alpha_m n^{k/2})$.
\end{itemize}
\end{lemma}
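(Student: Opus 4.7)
For part~(a), the plan is to test $\tGamma$ against the unit vector $v_0 := e_0^{\otimes n}$, the unique Hamming-weight-zero eigenvector in the Fourier basis. Since $E^{(n-k)}_m |0_{\bar S}\rangle = 0$ unless $m=0$, the only summand of $\tG_S$ that survives when applied to $v_0$ is the $m=0$ one, giving $\tG_S v_0 = \alpha_0\, F_S|0_S\rangle \otimes |0_{\bar S}\rangle$. A direct calculation with $F_S = \sqrt{q}\, R_S E^{(k)}_{<k}$ and $E^{(k)}_{<k}|0_S\rangle = |0_S\rangle$ shows $\|F_S|0_S\rangle\|^2 = q\,\|R_S|0_S\rangle\|^2 = q \cdot |T_S|/q^k = 1$, hence $\|\tG_S v_0\|^2 = \alpha_0^2$. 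Because distinct blocks $\tG_S$ occupy disjoint rows of $\tGamma$, summing over $S$ gives $\|\tGamma v_0\|^2 = \binom{n}{k}\alpha_0^2 = \Theta(\alpha_0^2 n^k)$, proving~(a).

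For part~(b), the key identity is the block-orthogonality $\|\tGamma v\|^2 = \sum_S \|\tG_S v\|^2$ (again from the disjoint row-blocks), which yields $\|\tGamma\|^2 \le \sum_S \|\tG_S\|^2$. To evaluate $\|\tG_S\|^2$, I would expand $\tG_S \tG_S^* = \sum_{m,m'} \alpha_m\alpha_{m'}\, F_S F_S^* \otimes E^{(n-k)}_m E^{(n-k)}_{m'}$; the orthogonality $E^{(n-k)}_m E^{(n-k)}_{m'} = \delta_{m,m'} E^{(n-k)}_m$ collapses this to $\sum_m \alpha_m^2\, F_S F_S^* \otimes E^{(n-k)}_m$, which is block-diagonal across the $E^{(n-k)}_m$-eigenspaces. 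Hence $\|\tG_S\|^2 = \max_m \alpha_m^2 \cdot \|F_S\|^2$.

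The whole proof then rests on showing $\|F_S\|^2 = O(1)$ with the constant depending only on $k$. This is the main obstacle: the trivial bound $\|F_S\|^2 \le q$ coming from $E^{(k)}_{<k} \preceq I$ is too weak by a full factor of $q$, which would ruin~(b). The plan is to write $F_S F_S^* = q R_S E^{(k)}_{<k} R_S^* = q(I - M_S)$ with $M_S := R_S E^{(k)}_k R_S^*$, so that $\|F_S\|^2 = q(1 - \lambda_{\min}(M_S))$, and to prove $\lambda_{\min}(M_S) \ge 1 - O_k(1/q)$ via Gershgorin on $M_S$. Its entries are $M_S(t,t') = E^{(k)}_k(t,t') = (1-1/q)^{k-d}(-1/q)^d$ at Hamming distance $d$, so each diagonal entry is at least $1 - k/q$ while $|M_S(t,t')| \le (1/q)^d$ off-diagonal. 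The orthogonal-array property of strength $k-1$ gives the crucial count: the fibre of the projection onto any $k-d$ coordinates of $T_S$ has size exactly $q^{d-1}$, so the number of $t' \in T_S$ at Hamming distance $d$ from a fixed $t$ is at most $\binom{k}{d} q^{d-1}$, and this count is $0$ for $d=1$. The off-diagonal row sum is therefore $\sum_{d=2}^k \binom{k}{d} q^{d-1}(1/q)^d \le 2^k/q$. Combining with the diagonal lower bound produces $\lambda_{\min}(M_S) \ge 1 - O_k(1/q)$, hence $\|F_S\|^2 \le O_k(1)$. Plugging back, $\|\tGamma\|^2 \le \binom{n}{k}\cdot O_k(1)\cdot \max_m \alpha_m^2 = O(n^k \max_m \alpha_m^2)$, which is~(b).
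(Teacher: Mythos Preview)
Your proof is correct. Part~(a) is essentially the paper's argument: both test $\tGamma$ against the all-ones (weight-zero) vector and observe that only the $\alpha_0$ term survives, giving $\|\tGamma\|\ge\alpha_0\sqrt{\binom nk}$.

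Part~(b) is where you genuinely diverge. The paper does not bound $\|F_S\|$ via Gershgorin. Instead it writes $F_S=\sum_{\ell=1}^k F_S^{(\ell)}$, where $F_S^{(\ell)}=\sum_{u\in U_\ell} u^{(\ell)}u^*$ collects those rank-one pieces with $u_\ell=0$; here $u^{(\ell)}$ is $\sqrt q$ times the restriction of $u$ to $T_S$. The orthogonal-array property says that dropping coordinate $\ell$ is a bijection $T_S\to[q]^{k-1}$, which makes the $u^{(\ell)}$ orthonormal for fixed $\ell$, so each $F_S^{(\ell)}$ is a partial isometry of norm~$1$. Splitting $\tGamma=\sum_\ell\tGamma^{(\ell)}$ accordingly and applying the triangle inequality twice (over $S$ and over $\ell$) yields~(b). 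Your route instead computes $F_SF_S^*=q(I-M_S)$ with $M_S=R_S E_k^{(k)}R_S^*$ and bounds $\lambda_{\min}(M_S)$ by Gershgorin, using that $T_S$ has no two codewords at Hamming distance~$1$ and that all fibres of the projection onto any $k-d$ coordinates have size exactly $q^{d-1}$ (which does follow from the paper's Definition~\ref{def:orthogonalArray}, since erasing one coordinate is a bijection). Both arguments exploit the same orthogonal-array structure; the paper's decomposition gives a slightly better constant ($\|F_S\|\le k$ versus your $\|F_S\|^2\le k+2^k$) and, more importantly, introduces the $u^{(\ell)}$ notation that is reused verbatim in the analysis of $\tGamma\circ\Delta_1$ in the next section, whereas your argument is more self-contained for this lemma alone.
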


\pfstart
Fix a subset $S$ and denote $T=T_S$ and $F=F_S$. Recall that $E^{(k)}_{<k}$ is the sum of $uu^*$ over all $u=e_{u_1}\otimes\cdots\otimes e_{u_k}$ with at least one $u_j$ equal to 0, and $F$ is the restriction of $E^{(k)}_{<k}$ to the rows in $T$.

For $u=e_{u_1}\otimes\cdots\otimes e_{u_k}$ and $\ell$ such that $u_\ell = 0$, let $u^{(\ell)}$ denote the $\sqrt{q}$ multiple of $u$ restricted to the elements in $T$. The reason for the superscript is that we consider the following process of obtaining $u^{(\ell)}$: we treat $T$ as $[q]^{k-1}$ by erasing the $\ell$-th element in any string of $T$, then $u^{(\ell)}$ coincides on this set with $u$ with the $\ell$-th term removed. 

In this notation, the contribution from $uu^*$ to $F$ equals $u^{(\ell_u)}u^*$, where $\ell_u$ is any position in $u$ containing $e_0$. In general, we do not know how $u^{(\ell)}$s relate for different $\ell$. However, we know that, for a fixed $\ell$, they are all orthogonal; and for any $\ell$, $(e_0^{\otimes k})^{(\ell)}$ is the vector $1/\sqrt{q^{k-1}}(1,\dots,1)$.

Let us start with proving (a). We estimate $\|\tGamma\|$ from below as $w^*\tGamma w'$, where $w$ and $w'$ are unit vectors with all elements equal. In other words, $\|\tGamma\|$ is at least the sum of all its entries divided by $\sqrt{\binom n k q^{2n-1}}$. In order to estimate the sum of the entries of $\tGamma$, we rewrite~\refeqn{GT} as
\begin{equation}
\label{eqn:GT2}
\tG_S = \alpha_0 e_0^{\otimes(n-1)}(e_0^{\otimes n})^* + {\sum_{u,v}} \alpha_{|v|} (u^{(\ell_u)}\otimes v)(u\otimes v)^* \enspace,
\end{equation}
where the summation is over all $u$ and $v$ such that at least one of them contains an element different from $e_0$. The sum of all entries in the first term of~\refeqn{GT2} is $\alpha_0 q^{n-1/2}$. The sum of each column in each of  $(u^{(\ell_u)}\otimes v)(u\otimes v)^*$ is zero because at least one of $u^{(\ell_u)}$ or $v$ sums up to zero. By summing over all $\binom n k$ choices of $S$, we get that $\norm{\tGamma} \ge \alpha_0\sqrt{\binom n k} = \Omega(\alpha_0 n^{k/2})$.

In order to prove (b), express $F_S$ as $\sum_{\ell=1}^k F^{(\ell)}_S$ with $F^{(\ell)}_S = \sum_{u\in U_\ell} u^{(\ell)}u^*$. Here $\{U_\ell\}$ is an arbitrary decomposition of all $u$ such that $U_\ell$ contains only $u$ with $e_0$ in the $\ell$-th position. Define $\tG^{(\ell)}_S$ as in~\refeqn{GT} with $F_S$ replaced by $F^{(\ell)}_S$, and $\tGamma^{(\ell)}$ as in~\refeqn{tGamma} with $\tG_S$ replaced by $\tG^{(\ell)}_S$.

Since all $u^{(\ell)}$s are orthogonal for a fixed $\ell$, we get that
\[
(\tG^{(\ell)})^*\tG^{(\ell)} = \sum_{u\in U_\ell, v} \alpha_{|v|}^2 (u\otimes v)(u\otimes v)^* \enspace,
\]
thus $\| (\tG^{(\ell)})^*\tG^{(\ell)} \| = \max_m \alpha_m^2$. By the triangle inequality, 
\[
\|\tGamma^{(\ell)}\|^2 = \left\|(\tGamma^{(\ell)})^*\tGamma^{\ell}\right\| =  \left\|\sum_S (\tG_S^{(\ell)})^*\tG_S^{(\ell)} \right\| \le 
\binom n k \max_m \alpha_m^2 \enspace. 
\]
Since $\tGamma = \sum_{\ell=1}^k \tGamma^{(\ell)}$, another application of the triangle inequality finishes the proof of (b).
\pfend

\subsection{Action of $\Delta_1$}
\label{sec:Delta1}

The adversary matrix is symmetric in all input variables and hence it suffices to only consider the entry-wise multiplication by $\Delta_1$.  Precise calculation of $\|\tGamma \circ \Delta_1\|$ is very tedious, but one can get an asymptotically tight bound using the following trick.
Instead of computing $\tGamma \circ \Delta_1$ directly, we arbitrarily map $\tGamma \deltamaps \tGamma_1$ such that $\tGamma_1 \circ \Delta_1 = \tGamma \circ \Delta_1$, and use the inequality $\|\tGamma_1 \circ \Delta_1\| \le 2 \|\tGamma_1\|$ that holds thanks to $\gamma_2(\Delta_1)\le 2$ \cite{lee:stateConversion}.  In other words, we change arbitrarily the entries with $x_1 = y_1$.  We use the mapping
\begin{equation}\label{eqn:E1}
E_0 \deltamaps E_0 \enspace, \qquad E_1 \deltamaps -E_0 \enspace, \qquad I \deltamaps 0 \enspace.
\end{equation}
The projector $E^{(k)}_{<k}$ is mapped by $\Delta_1$ as
\[
E^{(k)}_{<k} = I - E^{(k)}_k \deltamaps E_0 \otimes E_1^{\otimes (k-1)} \enspace.
\]
It follows that
\begin{equation}\label{eqn:F1}
F \deltamaps e_0^*\otimes E_1^{\otimes (k-1)} = \sum_{\substack{u=e_{u_1}\otimes\cdots\otimes e_{u_k}\\ u_0=0, |u|=k-1}} u^{(1)} u^* \enspace,
\end{equation}
where $u^{(1)}$ is defined like in the proof of \reflem{normGamma}. 

\subsection{Norm of $\tGamma_1$}
\label{sec:tGamma1Norm}

\begin{lemma}\label{lem:normGamma1}
Let $\tGamma$ be like in~\refeqn{tGamma} with $\tG_T$ defined like in~\refeqn{GT}, and map $\tGamma \deltamaps \tGamma_1$ and $\tG_T \deltamaps (\tG_T)_1$ using \refeqn{E1} and \refeqn{F1}. Then $\|\tGamma_1\| = O(\max_m (\max(\alpha_m m^{(k-1)/2}$,
 $(\alpha_m - \alpha_{m+1}) n^{k/2})))$.
\end{lemma}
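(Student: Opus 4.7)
The plan is to split $\tGamma_1$ into two pieces according to whether position $1$ lies in $S$, bound each in spectral norm, and combine by the triangle inequality. Write $\tGamma_1 = \tGamma_1^{\mathrm{in}} + \tGamma_1^{\mathrm{out}}$, where $\tGamma_1^{\mathrm{in}}$ stacks the mapped blocks $(\tG_S)_1$ over $S\ni 1$ (so $\Delta_1$ acts on the $F_S$ factor via \refeqn{F1}) and $\tGamma_1^{\mathrm{out}}$ stacks those over $S\not\ni 1$ (so $\Delta_1$ acts on the factor of $E^{(n-k)}_m$ sitting on position~$1$).

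For $\tGamma_1^{\mathrm{in}}$, by \refeqn{F1} each block has the form $(\tG_S)_1 = \sum_m \alpha_m \widetilde F_S \otimes E^{(n-k)}_m$ with $\widetilde F_S = e_0^*\otimes E_1^{\otimes(k-1)}$ restricted to rows in $T_S$. Since the vectors $u^{(1)}$ in \refeqn{F1} are orthonormal, $\widetilde F_S^* \widetilde F_S = E_0 \otimes E_1^{\otimes(k-1)}$; summing over $S\ni 1$ therefore gives a diagonal operator in the Hamming-scheme eigenbasis of $[q]^{n-1}$ (identifying $[n]\setminus\{1\}$ with $[n-1]$). A weight-$w$ eigenvector is acted on by exactly those subsets $S'=S\setminus\{1\}$ of size $k-1$ contained in its support, contributing the eigenvalue $\binom{w}{k-1}\alpha_{w-(k-1)}^2$. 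Setting $m=w-(k-1)$ and using $\binom{m+k-1}{k-1}=O(m^{k-1})$ for fixed $k$ gives $\|\tGamma_1^{\mathrm{in}}\| = O(\max_m \alpha_m m^{(k-1)/2})$.

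For $\tGamma_1^{\mathrm{out}}$, I first split $E^{(n-k)}_m$ by whether the factor on position~$1$ is $e_0 e_0^*$ or not, obtaining $E^{(n-k)}_m = E_0 \otimes E^{(n-k-1)}_m + E_1 \otimes E^{(n-k-1)}_{m-1}$; under \refeqn{E1} this maps to $E_0 \otimes (E^{(n-k-1)}_m - E^{(n-k-1)}_{m-1})$. Summing by parts in $m$,
\[
(\tG_S)_1 = \sum_m \alpha_m\, F_S \otimes E_0 \otimes (E^{(n-k-1)}_m - E^{(n-k-1)}_{m-1}) = \sum_m (\alpha_m-\alpha_{m+1})\, F_S \otimes E_0 \otimes E^{(n-k-1)}_m.
\]
This has exactly the structure of the original $\tG_S$ from \refeqn{GT}: coefficients $\beta_m = \alpha_m-\alpha_{m+1}$, a fixed extra factor $E_0$ on position~$1$ that is harmless for the spectral norm, and the weight projector acting on $n-k-1$ rather than $n-k$ coordinates. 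The argument of part~(b) of \reflem{normGamma} applies verbatim: decompose $F_S = \sum_\ell F_S^{(\ell)}$ using the orthonormal $u^{(\ell)}$, bound each $\|(\tG_S)_1^{(\ell)}\|$ by $\max_m |\beta_m|$, and apply the triangle inequality across the $\binom{n-1}{k}$ choices of $S\not\ni 1$ to obtain $\|\tGamma_1^{\mathrm{out}}\| = O(n^{k/2}\max_m |\alpha_m-\alpha_{m+1}|)$.

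The main obstacle, once this dichotomy is set up, is the eigenvalue bookkeeping for $\tGamma_1^{\mathrm{in}}$: one must recognise that $(\tGamma_1^{\mathrm{in}})^*\tGamma_1^{\mathrm{in}}$ is diagonal in the product eigenbasis and correctly count the $\binom{w}{k-1}$ subsets compatible with a given weight-$w$ eigenvector. Once that combinatorial factor is in hand, the inequality $\|\tGamma_1\| \le \|\tGamma_1^{\mathrm{in}}\| + \|\tGamma_1^{\mathrm{out}}\|$ combined with the two bounds above gives exactly the claim.
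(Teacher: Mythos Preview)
Your argument is correct and follows essentially the same route as the paper: split the blocks according to whether $1\in S$, diagonalize the $S\ni 1$ contribution in the Hamming eigenbasis and count the $\binom{m+k-1}{k-1}$ compatible subsets, and for $S\not\ni 1$ perform the summation by parts to reduce to \reflem{normGamma}(b). The only cosmetic difference is that the paper bounds $\|\tGamma_1^*\tGamma_1\|$ and applies the triangle inequality to the sum $\sum_S (\tG_S)_1^*(\tG_S)_1$, whereas you apply the triangle inequality directly to $\|\tGamma_1^{\mathrm{in}}\|+\|\tGamma_1^{\mathrm{out}}\|$; both yield the same asymptotic bound.
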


\pfstart
\renewcommand{\S}{\mathcal{S}}
We have $\|\tGamma_1\|^2 = \|\tGamma_1^* \tGamma_1\| = \|\sum_S (\tG_S)_1^* (\tG_S)_1\|$.
Decompose the set of all possible $k$-tuples of indices into $\S_1 \cup \S_2$, where $\S_1$ are $k$-tuples containing 1 and $\S_2$ are $k$-tuples that don't contain 1.  We upper-bound the contribution of $\S_1$ to $\|\tGamma_1\|^2$ by $\max_m \alpha_m^2 \binom {m+k-1} {k-1}$ and the contribution of $\S_2$ by $\max_m(\alpha_m - \alpha_{m+1})^2 k \binom {n-1} k$, and apply the triangle inequality.

Let $v = e_{v_1} \otimes \cdots \otimes e_{v_n}$ with $|v| = m+k-1$, and let $S \in \S_1$.  Then, by~\refeqn{F1},
\[
(\tG_S)_1 v = \left\{ \begin{tabular}{l l}
$\alpha_m v^{(1)}$, & $v_1=0$ and $|v_S|=k-1$, \\
0, & otherwise.
\end{tabular} \right.
\]
Here $v^{(1)}= e_{v_2} \otimes \cdots \otimes e_{v_n}$ is $v$ with the first term removed and $v_S = \bigotimes_{s\in S} e_{v_s}$. 

For different $v$, these are orthogonal vectors, and hence $v$ is an eigenvector of $(\tG_S)_1^* (\tG_S)_1$ of eigenvalue $\alpha_m^2$ if $v_1=0$ and $|v_S|=k-1$, and of eigenvalue 0 otherwise. For every $v$ with $v_1=0$ and $|v|=m+k-1$, there are $\binom {m+k-1} {k-1}$ sets $S\in \S_1$ such that $(\tG_S)_1 v \ne 0$. Thus, the contribution of $\S_1$ is as claimed.

Now consider an $S\in \S_2$, that means $1 \not\in S$.
\begin{eqnarray*}
\tG_S &=& \sum_{m=0}^{n-k} \alpha_m F_S \otimes E^{(n-k)}_m \\
&=& \sum_{m=0}^{n-k} \alpha_m F_S \otimes (E_0 \otimes E^{(n-k-1)}_m + E_1 \otimes E^{(n-k-1)}_{m-1}) \\
&\deltamaps& \sum_{m=0}^{n-k} \alpha_m F_S \otimes E_0 \otimes (E^{(n-k-1)}_m - E^{(n-k-1)}_{m-1}) \\
= (\tG_S)_1 &=& \sum_{m=0}^{n-k} (\alpha_m - \alpha_{m+1}) F_S \otimes E_0 \otimes E^{(n-k-1)}_m \enspace.
\end{eqnarray*}
Therefore $(\tG_S)_1$ is of the same form as $\tG_S$, but with coefficients $(\alpha_m - \alpha_{m+1})$ instead of $\alpha_m$ and on one dimension less. We get the required estimate from \reflem{normGamma}(b).

Since $k = O(1)$, we get the claimed bound.
\pfend

\subsection{Optimization of $\alpha_m$}
\label{sec:alpha}

To maximize the adversary bound, we maximize $\|\tGamma\|$ while keeping $\|\tGamma_1\|=O(1)$.  That means, we choose the coefficients $(\alpha_m)$ to maximize $\alpha_0 n^{k/2}$ (\reflem{normGamma}) so that, for every $m$, $\alpha_m \le {m^{(1-k)/2}}$ and $\alpha_m \le \alpha_{m+1} + {n^{-k/2}}$ (\reflem{normGamma1}).


For every $r \in [n]$, $\alpha_0 \le \alpha_r + r {n^{-k/2}} \le {r^{(1-k)/2}} + r {n^{-k/2}}$.
The expression on the right-hand side achieves its minimum, up to a constant, $\alpha_0 = 2\ n^{{k (1-k)}/ (2 (k+1))}$ for $r=n^{k/(k+1)}$.  This corresponds to the following solution:
\begin{equation}
\label{eqn:alphas}
\alpha_m = \max\left\{2 - \frac m {n^{k/(k+1)}}, 0\right\} n^{\frac {k (1-k)} {2 (k+1)}}
\end{equation}
With this choice of $\alpha_m$, $\|\tGamma\| = \Omega(\alpha_0 n^{k/2}) = \Omega(n^{k/ (k+1)})$.

\subsection{Constructing $\Gamma$ from $\tGamma$}
\label{sec:Gamma}

The matrix $\tGamma$ gives us the desired ratio of norms of $\tGamma$ and $\tGamma\circ\Delta_i$. Unfortunately, $\tGamma$ cannot directly be used as an adversary matrix, because it contains illegal columns $y$ with $f(y)=1$, that is, $y$ that contain an element of the orthogonal array on $S \subset [n]: |S|=k$, i.e., $y_S \in T_S$. We show that after removing the illegal columns it is still good enough.

\begin{lemma}
Let $\Gamma$ be the sub-matrix of $\tGamma$ with the illegal columns removed.  Then $\|\Gamma \circ \Delta_1\| \le \|\tGamma \circ \Delta_1\|$, and $\|\Gamma\|$ is still $\Omega(\alpha_0 n^{k/2})$ when $q \ge n^k$.
\end{lemma}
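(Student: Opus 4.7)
The first inequality $\|\Gamma \circ \Delta_1\| \le \|\tGamma \circ \Delta_1\|$ is immediate: $\Gamma \circ \Delta_1$ is obtained from $\tGamma \circ \Delta_1$ by deleting columns, and the spectral norm never grows under restriction to a submatrix. All the content of the lemma is therefore in the norm lower bound for $\Gamma$ itself.

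My plan is to mimic the uniform test-vector argument from the proof of \reflem{normGamma}(a). Let $\mathbf{1}_R$ and $\mathbf{1}_C$ be the all-ones vectors on the rows and legal columns of $\Gamma$, respectively. Then
\[
\|\Gamma\| \ge \frac{\mathbf{1}_R^* \,\Gamma\, \mathbf{1}_C}{\|\mathbf{1}_R\|\,\|\mathbf{1}_C\|},
\]
and the denominator is easy to control: $\|\mathbf{1}_R\| = \sqrt{\binom{n}{k} q^{n-1}}$ while $\|\mathbf{1}_C\| \le \sqrt{q^n}$. The real task is to show that the numerator is still $\Omega(\binom{n}{k}\alpha_0 q^{n-1/2})$, i.e., that discarding the illegal columns costs only a constant factor relative to the total sum of entries of $\tGamma$ computed in the proof of \reflem{normGamma}(a).

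The technical heart is the following column-sum identity: for every $y \in [q]^n$, every block $\tG_S$ has column sum exactly $\alpha_0 q^{-1/2}$, independent of $y$. By \refeqn{GT} the terms with $m \ge 1$ contribute zero column sum, since $E^{(n-k)}_m$ kills the all-ones vector, so it suffices to prove $\mathbf{1}^* F_S = q^{-1/2}\mathbf{1}^*$. Writing $\chi_{T_S}$ for the indicator vector of $T_S \subseteq [q]^k$ and using \refdef{F}, this is equivalent to $\chi_{T_S}^* E^{(k)}_{<k} = q^{-1}\mathbf{1}^*$, i.e.\ $\langle \chi_{T_S}, u\rangle = 0$ for every eigenvector $u = e_{u_1} \otimes \cdots \otimes e_{u_k}$ of weight $0 < |u| < k$. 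Setting $L = \{j : u_j = 0\}$, the orthogonal array property implies that the projection of $T_S$ onto any proper coordinate subset is uniform-to-one (projection onto any $k-1$ coordinates is a bijection by \refdef{orthogonalArray}, hence onto fewer coordinates it is uniform). Therefore the sum defining $\langle \chi_{T_S}, u\rangle$ factorizes into univariate sums $\sum_{v_j\in[q]} e_{u_j}(v_j)$ over the coordinates $j \notin L$, each of which vanishes whenever $u_j \ne 0$. Only $u = e_0^{\otimes k}$ survives, giving the claim.

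Given this, the total sum of entries of $\tGamma$ is $q^n \cdot \binom{n}{k} \alpha_0 q^{-1/2} = \binom{n}{k} \alpha_0 q^{n-1/2}$. A union bound gives at most $\binom{n}{k}\,q^{n-1}$ illegal columns, and each of them contributes $\binom{n}{k}\alpha_0 q^{-1/2}$ across the $\binom{n}{k}$ stacked blocks; together they contribute at most $\binom{n}{k}^2 \alpha_0 q^{n-3/2}$ to the total, which is a $\binom{n}{k}/q \le 1/k!$ fraction once $q \ge n^k$, bounded away from $1$ for fixed $k \ge 2$. Hence $\mathbf{1}_R^* \Gamma\, \mathbf{1}_C = \Omega(\binom{n}{k} \alpha_0 q^{n-1/2})$, and dividing by the denominator yields $\|\Gamma\| = \Omega(\sqrt{\binom{n}{k}}\,\alpha_0) = \Omega(\alpha_0 n^{k/2})$. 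The main obstacle, and the sole place where the orthogonal array axiom is used essentially, is establishing that $\mathbf{1}^* F_S$ is the constant vector $q^{-1/2}\mathbf{1}^*$.
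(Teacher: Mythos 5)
Your proof is correct and follows essentially the same route as the paper: estimate $\|\Gamma\|$ from below via uniform test vectors, observe that every column of $\tG_S$ has the same sum $\alpha_0 q^{-1/2}$, and then show that removing the illegal columns loses only an $O(\binom{n}{k}/q)$ fraction of the total mass. The one cosmetic difference is how the constant column-sum property is established: the paper simply reuses the decomposition \refeqn{GT2} from the proof of \reflem{normGamma}(a) (where the vanishing of $\sum u^{(\ell_u)}$ for $u\neq e_0^{\otimes k}$ is a consequence of reparametrizing $T_S$ as $[q]^{k-1}$), whereas you re-derive the identity directly, in the clean form $\chi_{T_S}^* E^{(k)}_{<k} = q^{-1}\mathbf{1}^*$, from the uniform-to-one projection property of the orthogonal array. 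These are the same underlying fact expressed two ways; your formulation is a bit more self-contained, while the paper's is shorter because it leans on already-established machinery. The final arithmetic (the $\sqrt{|L|/q^n}$ factor in the paper versus your explicit loss accounting of $\binom{n}{k}^2\alpha_0 q^{n-3/2}$) is also just two bookkeeping styles for the same estimate.
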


\pfstart
We estimate $\|\Gamma\|$ from below by $w^* \Gamma w'$ using unit vectors $w, w'$ with all elements equal.  Recall Equation \refeqn{GT2}:
\[
\tG_T = \alpha_0 e_0^{\otimes(n-1)}(e_0^{\otimes n})^* + {\sum_{u,v}} \alpha_{|v|} (u^{(\ell_u)}\otimes v)(u\otimes v)^* \enspace,
\]
where the summation is over all $u$ and $v$ such that at least one of them contains an element different from $e_0$. The sum of each column in each of  $(u^{(\ell_u)}\otimes v)(u\otimes v)^*$ still is zero because at least one of $u^{(\ell_u)}$ or $v$ sums up to zero.  Therefore the contribution of the sum is zero regardless of which columns have been removed.

By summing over all $\binom n k$ choices of $S$, we get
\[
\|\Gamma\| \ge w^* \Gamma w' = \sqrt{\binom n k} \alpha_0\ (e_0^{\otimes n})_L^* w' \enspace,
\]
where $e_L$ denotes the sub-vector of $e$ restricted to $L$, and $L$ is the set of legal columns.  Since both $e_0$ and $w'$ are unit vectors with all elements equal, and $w'$ is supported on $L$, $(e_0^{\otimes n})_L^* w' = \sqrt{|L|/q^n}$.

Let us estimate the fraction of legal columns.  The probability that a uniformly random input $y \in [q]^n$ contains an orthogonal array at any given $k$-tuple $S$ is $\frac 1 q$.  By the union bound, the probability that there exists such $S$ is at most $\binom n k \frac 1 q$.  Therefore the probability that a random column is legal is $\frac {|L|} {q^n} \ge 1 - \binom n k \frac 1 q$, which is $\Omega(1)$ when $q \ge n^k$.
\pfend

Thus, with the choice of $(\alpha_m)$ from~\refeqn{alphas}, we have $\Adv(f) = \Omega(\alpha_0 n^{k/2}) = \Omega(n^{k/(k+1)})$. This finishes the proof of \refthm{orthogonal}.

\section{Open problems}


Our technique relies crucially on the $n^k$ lower bound on the alphabet size.  Can one relax this bound in some special cases?  For example, element distinctness is nontrivial when $q \ge n$, but our lower bound only holds for $q \ge n^2$.

A tight $\Omega(n^{2/3})$ lower bound for element distinctness was originally proved by the polynomial method \cite{shi:collisionLower} by reduction via the collision problem.  The $k$-collision problem is to decide whether a given function is $1:1$ or $k:1$, provided that it is of one of the two types.  One can use an algorithm for element distinctness to solve the 2-collision problem, and thus the tight $\Omega(n^{1/3})$ lower bound for collision in \cite{shi:collisionLower} implies a tight lower bound for element distinctness.  Unfortunately, the reduction doesn't go in both directions and hence our result doesn't imply any nontrivial adversary bound for $k$-collision.  The simpler non-negative adversary bound is limited to $O(1)$ due to the property testing barrier.  Roughly speaking, if every 0-input differs from every 1-input in at least an $\varepsilon$-fraction of the input, the non-negative adversary bound is limited by $O(\frac 1 \varepsilon)$.  How does an explicit negative adversary matrix for an $\omega(1)$ lower bound look like?

The recent learning graph-based algorithm for $k$-distinctness \cite{belovs:learningKDist} uses $O(n^{1 - 2^{k-2} / (2^k-1)})$ quantum queries, which is less than $O(n^{k/(k+1)})$ but more than the $\Omega(n^{2/3})$ lower bound by reduction from 2-distinctness.  $k$-distinctness is easier than the $k$-sum problem considered in our paper because one can obtain nontrivial information about the solution from partial solutions, i.e., from $\ell$-tuples of equal numbers for $\ell < k$.  Can one use our technique to prove an $\omega(n^{2/3})$ lower bound for $k$-distinctness?

The $k$-sum problem is very structured in the sense that all $k$-tuples of the input variables, and all possible values seen on a $(k-1)$-tuple, are equal with respect to the function.  The symmetry of this problem helped us to design a symmetric adversary matrix.  The nonnegative adversary bound gives nontrivial lower bounds for most problems, by simply putting most of the weight on hard-to-distinguish input pairs, regardless of whether the problem is symmetric or not.  Can one use our technique to improve the best known lower bounds for some non-symmetric problems, for example, to prove an $\omega(\sqrt n)$ lower bound for graph collision, $\omega(n)$ for triangle finding, or $\omega(n^{3/2})$ for verification of matrix products?

\subsection*{Acknowledgments}
A.B. would like to thank Andris Ambainis, Troy Lee and Ansis Rosmanis for valuable discussions.  We are grateful to Kassem Kalach for informing about the applications of the $k$-sum problem in Merkle puzzles, and for reporting on some minor errors in the early version of the paper.

A.B. is supported by the European Social Fund within the project ``Support for Doctoral Studies at University of Latvia'' and by FET-Open project QCS.


\bibliographystyle{../../habbrvE}
\bibliography{../../bib}

\end{document}